\documentclass{elsart}

\usepackage{amsfonts}
\usepackage{amsmath}
\usepackage{amssymb}
\usepackage[all]{xy}
\usepackage{graphicx}
\usepackage{amscd}

\usepackage{color}

\input xy
\xyoption{all}

\newtheorem{theorem}{Theorem}[section]
\newtheorem{corollary}[theorem]{Corollary}

\newtheorem{proposition}[theorem]{Proposition}

\theoremstyle{definition}

\theoremstyle{remark}

\newtheorem{example}[theorem]{Example}

\newenvironment{proof}[1][Proof.]{\begin{trivlist}
\item[\hskip \labelsep {\it #1}]}{$\hfill\square$ \end{trivlist}}

\newcommand{\ds}{\displaystyle}
\newcommand{\nlb}{\nolinebreak}

\newcommand{\F}{\mathbb{F}}

\newcommand{\N}{\mathbb{N}}

\newcommand{\C}{\mathcal{C}}

\newcommand{\df}{d_{free}}

\begin{document}

\begin{frontmatter}

\title{Receding horizon decoding of convolutional codes\thanksref{labeltitle}}
\author[label1]{Jos\'e Ignacio Iglesias Curto\corauthref{cor1}}
\author[label2]{Uwe Helmke}

\thanks[labeltitle]{Research partially supported by DFG-SPP 1305 grant HE
1858/12-1 and by Junta de Castilla y Le\'on research project
SA029A08.
\\Email: joseig@usal.es, helmke@mathematik.uni-wuerzburg.de}
\corauth[cor1]{Corresponding author.}

 \address[label1]{Dept. de Matem\'aticas, University of Salamanca, 37008 Salamanca, Spain}
 \address[label2]{Institute of Mathematics, University of W{\"u}rzburg, 97074 W{\"u}rzburg, Germany}

\begin{abstract}
Decoding of convolutional codes poses a significant challenge for
coding theory. Classical methods, based on e.g. Viterbi decoding,
suffer from being computationally expensive and are restricted
therefore to codes of small complexity. Based on analogies with
model predictive optimal control, we propose a new iterative method
for convolutional decoding that is cheaper to implement than
established algorithms, while still offering significant error
correction capabilities. The algorithm is particularly well-suited
for decoding special types of convolutional codes, such as e.g.
doubly cyclic convolutional codes.
\end{abstract}

\begin{keyword}
Convolutional codes \sep Hamming distance \sep decoding \sep
receding horizon
\end{keyword}
\end{frontmatter}

\section{Introduction}
A central aim of coding theory is to protect transmitted or stored
information against errors. A common technique is to split the
sequence of information symbols into blocks of constant length and
map each block injectively to a codeword of larger length. This map
is called the encoding map and its image space is referred to as a
block code. To protect the process from transmission errors requires
a procedure that enables one to recover the sent message from the
received one by projecting it back to the code. This is called
decoding and forms the basis of error correction algorithms of
current codes.  Decoding is an inherently difficult task, but for
block codes effective decoding algorithms are available that depend
on the special algebraic structure of the codes; e.g. BCH codes,
Reed-Solomon codes, or list decoding techniques, see e.g.
\cite{MS:77, LC:83}.

In the sequel, we will consider block codes only as an intermediate
step for decoding convolutional codes. Such \emph{convolutional
codes} are a natural generalization of block codes and have found
widespread applications; see \cite{McE:98,Pir:88}. Algebraically,
they are defined as submodules of $\F[z]^n$, spanned by the columns
of a full rank rectangular polynomial matrix $G(z)\in \F^{n\times
k}[z]$. In contrast to block codes, where a rather rich theory is
available, there is no really efficient decoding algorithm known for
convolutional codes. Classical decoding algorithms for convolutional
codes such as e.g. Viterbi decoding~\cite{LC:83} work well only for
codes of moderate dimensions. There is thus considerable interest in
developing efficient decoding algorithms for convolutional codes
that improve known algorithms, such as Viterbi decoding.

Since  convolutional codes can be interpreted as linear control
systems defined over a finite field $\F$, see the survey paper
\cite{Ros:01} and the references therein, it is possible to apply
the rich tools from linear systems theory and optimal control to
study such codes. This approach offers a better understanding of
convolutional codes and led already to new construction methods and
algorithms; see \cite{MS:77, Glu:05,HRS:05,Ros:01,RSY:96}. From a
systems theory point of view, decoding of convolutional codes can be
interpreted in at least two different ways \cite{Ros:97}. One
interpretation treats decoding as a tracking problem, where the
decoder attempts to track the received message by the most probable
codeword sent. Another perspective lies in treating it as a
filtering problem, where the decoder is requested to filter the
noise introduced by the channel.

In this work we will follow the first approach, by treating the
decoding problem of convolutional codes as a tracking exercise for
linear systems to replace the received message by its closest
polynomial codeword \cite{Ros:97}. Motivated by analogies with model
predictive control, we propose a simple receding horizon algorithm
for convolutional decoding, that combines an arbitrary block
decoding algorithm with a few receding horizon steps. The proposed
method has the advantage of being computationally cheaper than
well-known Viterbi decoding, while still achieving substantial error
correction. In a companion paper \cite{GHC:09} we show that our
algorithm can lead to competitive decoding results for the class of
doubly-cyclic convolutional codes.

The paper is structured as follows. Section 2 introduces basic
terminology from coding theory and outlines a decoding procedure for
convolutional codes via the Bellman optimality principle.
In Section 3, the new algorithm is proposed and error correction
capabilities of the method are established. Examples are discussed in
section 4 to illustrate the decoding properties of the algorithm.

\section{Dynamic programming approach to convolutional decoding}

\subsection{Coding theory terminology}
We begin with a brief summary of basic notions from coding theory.
For further information we refer to standard textbooks as, e.g.,
\cite{MS:77,McE:77}. Throughout this paper, $\F$ denotes a finite
field with $q$ elements. Given any full column rank matrix $G\in
\F^{n\times k}$, a \emph{linear block code} of length $n$ and
dimension $k$ is a $k$-dimensional vector space
\begin{equation*}
\C:=\{Gv| v\in \F^k\}\subset\F^n
\end{equation*}
and $G$ is called a \emph{generator matrix} of $\C$. If, possibly
after a permutation of the rows, $G\in \F^{n\times k}$ is of the
form
\begin{equation*}
G=\begin{pmatrix}
    A \\
    I_k
\end{pmatrix},
\end{equation*}
then $G$ is called a \emph{systematic generator matrix}. The
\emph{check matrix} of the code then is defined as
\begin{equation*}
S=\begin{pmatrix}
    I_{n-k}& -A
\end{pmatrix}
\end{equation*}
and, since $\C=\rm{Ker} S$, it provides a kernel description of the
code. A natural metric on a code is defined by the Hamming distance.
The \emph{Hamming distance}  of vectors $x,y\in\F^n$  is defined by
$d(x,y):= w(x-y)$, where
\begin{equation*}
    w(c):=\#\{i|c_i\neq 0\}.
\end{equation*}
denotes the \emph{Hamming weight}.
 The \emph{minimum distance} of a code $\C$ then is the minimum among the distances between any pair
of codewords, i.e.
\begin{equation*}
    \rm{d}(\C):=\min\limits_{c\in \C\setminus\{0\}}w(c).
\end{equation*}
It measures the error correcting capabilities for the code. In case
a received vector $x\in \F^n$ does not belong to $\C$, i.e. if some
error has occurred in the encoding process, a natural way to recover
the sent message is by taking the maximum likelihood estimate of
$x$, i.e. to replace $x$ by the closest vector from $\C$. This
process is called \emph{decoding}. A decoder is a map
$\pi:\F^n\rightarrow \C$ that is the identity on $\C$. A special
case in point are maximum likelihood decoders that are defined by
the set valued map
\begin{equation*}
    \rm{\pi}_{\C}(x):=\arg\min\limits_{c\in
    \C}d(x,c)\subset \C.
\end{equation*}

In order to achieve the best error correction properties of
decoding, the vectors from $\C$ should be as far apart as possible.
Thus one is searching for codes whose minimal distance is a large as
possible.  The maximum number of errors that a code can correct is
known as the \emph{error correction capacity} of the code, which is
half the minimum distance of the code minus one. If a code of length
$n$ has minimum distance $d= \rm{d}(\C)$, then the spheres centered
at the codewords with radius $\lfloor \frac{d-1}{2}\rfloor$ are
disjoint. Any vector from $\F^n$ contained in one of these spheres
can thus be decoded to its unique nearest codeword, the center of
the sphere. But the union of these spheres may not cover all vectors
of the ambient space. The \emph{covering radius} $\rho_\C$ of a code
$\C$ is the maximum distance from any vector of $\F^n$ to its
nearest codeword. The covering radius of a code is the smallest
radius needed for spheres centered at the codewords to cover the
whole ambient space.

\subsection{Convolutional codes and Bellman principle}

In this paper we follow the well-established approach, see e.g.
\cite{Ros:97}, that regards a convolutional code $\C$ of length $n$,
dimension $k$ and complexity $\delta$ as a submodule
\begin{equation*}
 \C:=\{G(z)v(z)| v(z)\in \F^k[z]\}\subset \F^n[z],
 \end{equation*}
defined by a full column rank polynomial matrix $G(z)\in \F^{n\times
k}[z]$, $k\leq n$. Here the complexity $\delta$ of $\C$, also referred
to as the ``constraint length'' of the code~\cite{O:68}, is defined
as the maximal degree of all $k\times k$ minors of $G(z)$.

After a suitable permutation of the rows, we can assume that the
generator matrix is of the form
\begin{equation}
\label{eq:gen}
     G(z)=\left(
   \begin{array}{c}
     P(z)_{}\\
     Q(z)_{}
   \end{array}
   \right)
 \end{equation}
with right coprime
   polynomial factors $P(z)\in \F^{(n-k)\times k}$ and $Q(z)\in
   \F^{k\times k}$, respectively. Here $\delta=\rm{deg} det Q(z)$ is
   assumed to be the maximal degree of all $k\times k$ minors of $G$.
Therefore the transfer function $P(z)Q(z)^{-1}\in\F^{(n-k)\times
k}(z)$ is proper rational of McMillan degree $\delta$ and thus has a
minimal (i.e. controllable and observable) state space realization
\begin{equation}\label{eq:systemD}
  \begin{array}{l}
    x_{t+1}=A x_t + B u_t, \quad  x_0=0\\
    y_t = C x_t + D u_t,
  \end{array}
\end{equation}
$A\in \F^{\delta \times \delta}$, $B\in\F^{\delta \times k}$,
$C\in\F^{(n-k)\times \delta}$, $D\in\F^{(n-k)\times k}$,
  $x\in\F^{\delta},u\in\F^{k} y\in\F^{n-k}$.
Conversely, given any such linear linear systems representation over
the field $\F$, then a right coprime factorization of the transfer
function $C(zI_{\delta}-A)^{-1}B+ D=P(z)Q(z)^{-1}$ defines a
convolutional code with generator matrix (\ref{eq:gen}). Thus the
complexity of the code corresponds to the McMillan degree of the
associated rational transfer function, i.e. to the dimension of the
state space of the associated controllable and observable linear
system.

In the above framework, codewords of a convolutional code correspond
to polynomials $c(z)=\sum c_i z^i\in\C$ with a finite number of
coefficients $c_0,\cdots, c_{\gamma}\in\F^n$. We extend the Hamming
distance on $\F^n$ to a metric on vector polynomials $\F^n[z]$ via
$\rm{dist}(\tilde{c}(z),c(z)):= \sum_{i=0}^{\infty}d(\tilde{c}_i,c_i)$.
Given any polynomial  $\tilde{c}(z)=\sum_{i=0}^{T}\tilde{c}_i z^i\in\F^n[z]$
the task of minimal distance decoding then asks to find a code word
$c_{\rm{opt}}\in\C$ that minimizes the distance to $\tilde{c}$, i.e.
\begin{equation*}
c_{\rm{opt}}:=\arg\min\limits_{c\in\C}\rm{dist}(\tilde{c},c)
\end{equation*}

Such an optimal codeword always exists, since $\F^n$ is finite, but
need not be unique. The map $\tilde{c}\mapsto c_{\rm{opt}}$ is
called a minimal distance decoder for $\C$. The method of dynamic
programming yields a way to calculate this optimal code word
$c_{\rm{opt}}$ for each specific choice of $\tilde{c}$.

To see how this works we first give a linear systems interpretation of
the code words of $\C$. This requires that (\ref{eq:systemD}) is
observable.
% Thus, in terms of the state space
% description (\ref{eq:systemD}), they are precisely
% the finite sequences of input output pairs
% \begin{equation*}
% c_t:=\begin{pmatrix}
%     y_t\\ u_t
%  \end{pmatrix}\in\F^{(n-k)}\times\F^{k}, \quad t=0,\cdots, T
% \end{equation*}
% for arbitrary $T\in\N_0$.
Rosenthal \cite{Ros:97} has shown that a
polynomial $c(z)=\sum_{i=0}^{\gamma} c_i z^i$ is a code word of $\C$
if and only if and only if
\begin{equation}
\label{eq:rosen}
  \begin{matrix}
    \begin{pmatrix}
      0 & A^{\gamma}B & A^{\gamma-1}B & \ldots & \ldots B\\
        &  D  \\
        &  CB & D & \\
    -I  &  CAB & CB & D\\
        & \vdots &  \ddots & \ddots &  \\
        & CA^{\gamma-1}B& \ldots & \ldots & CB & D \\
    \end{pmatrix}
\begin{pmatrix}
y_0\\
\vdots\\
y_{\gamma}\\
u_0\\
\vdots\\
u_{\gamma}\\
\end{pmatrix}=0,
  \end{matrix}
\end{equation}

where $I$ denotes the $(\gamma+1)(n-k)\times (\gamma+1)(n-k)$-
identity matrix and
\begin{equation*}
c_t=\begin{pmatrix}
    y_t\\ u_t
 \end{pmatrix}\in\F^{(n-k)}\times\F^{k}, \quad t=0,\cdots, \gamma.
\end{equation*}

The input sequences $u_0,\cdots,u_{\gamma}$ in (\ref{eq:systemD})
that define an admissible  code word $c$ are therefore just those
controls that steer the initial condition $x_0=0$ in finite time
back to $x_0=0$. It is easily seen that there exist always a
nontrivial choice of such input sequences.  Moreover, the minimal
time to steer back to $x_0=0$ via a non-zero input is at least
$\gamma\geq \kappa_{\rm{min}}$; $\kappa_{\rm{min}}\leq
\lfloor\frac{\delta}{k} \rfloor $ being the smallest controllability
index of (\ref{eq:systemD}). Let ${\cal U}=\rm{proj}(\C)$ denote the
submodule of $\F^k[z]$, obtained by projection of $\C$ onto
$\F^k[z]$. By coprimeness of $P,Q$ it follows that
\begin{equation*}
{\cal U}:= Q(z)\F^k[z]=\{Q(z)u(z)\in \F^k[z]\;| u(z)\in \F^k[z]\; \}.
\end{equation*}
Moreover, expressed in terms of controllable and observable
realizations $(A,B,C,D)$ of $P(z)Q(z)^{-1}$, we obtain the state
space description
\begin{equation*}
{\cal U}= \{u(z)=\sum_{i=0}^{\gamma} u_i z^i\in \F^k[z]\;|
\sum_{i=0}^{\gamma}A^iBu_i=0, \;\gamma \in \N_0 \; \}.
\end{equation*}
In linear systems theory, ${\cal U}$ is therefore referred to as the
module of zero return.

Now assume, that we want to decode a polynomial $\tilde{c}\in
\F^n[z]$ via (\ref{eq:systemD}) with coefficients
\begin{equation*}
\tilde{c_t}=\begin{pmatrix}
    \tilde{y}_t\\ \tilde{u}_t
 \end{pmatrix}\in\F^{(n-k)}\times\F^{k}, \quad t=0,\cdots, T
\end{equation*}
By the above, this means to solve the optimal control problem of
finding an admissible input function $u\in {\cal
  U}$ that minimizes the tracking error cost functional
\begin{equation*}
\begin{array}{rl}
    J(x_0,u):=\rm{dist}(\tilde{c}, c)&= \ds{\sum_{t=0}^{\infty}} k_t(x_t,u_t)
\end{array}
\end{equation*}
where
\begin{equation}
\label{eq:ka}
k_t(x,u):= w(u-\tilde{u}_t)+w(Cx-\tilde{y}_{t}+Du), \quad t\in\N_0.
\end{equation}
Note that the above series is always finite, as the inputs $u(z)$
are constrained to be admissible polynomials. Thus this is an
optimal control problem where the classical $l^2$-distance from
linear quadratic controller design is replaced by the Hamming
distance. Let
\begin{equation}
V_{\infty}(x_0)=\inf_{u\in {\cal U}}  J(x_0,u)
\end{equation}
denote the optimal value function. The optimal control can then be
computed via the Bellman principle, although this is a bit complicated
here due to the varying length of the inputs. Thus we do not pursue
this approach here. A simplified analysis
can be given under the assumption that all data are available over a
fixed horizon $T$. Thus assume, we consider the task of minimizing
\begin{equation*}
 J(x_0,u({\cdot}), T)=\sum_{t=0}^{T}k_t(x_t,u_t)
\end{equation*}
with $k_t(x,u)$ as in (\ref{eq:ka}) and we optimize over all input
sequences $u_0,\cdots, u_T$. For $N=0,\cdots,T$ let
\begin{equation*}
V_N(x)=\min \sum_{t=N}^{T}k_t(x_t,u_t)
\end{equation*}
denote the $N$-th value function, $x_N:=x$, where minimization
occurs over all sequences $u_N,\cdots,u_T$.  By the Bellman
principle, these functions satisfy the functional equation
\begin{equation*}
V_N(x)=\min_{u} \{k_{N}(x,u)+V_{N+1}(Ax+Bu)\}; \quad N=0,\cdots,T-1.
\end{equation*}
For
\begin{equation*}
u_{N}(x)=\arg\min_{u} \{k_{N}(x,u)+V_{N+1}(Ax+Bu)\}
\end{equation*}
the optimal control strategy for $x_0=0$ then becomes
\begin{equation*}
x_{t+1}=Ax_t+Bu_{t}(x_t), \quad t=0,\cdots, T-1.
\end{equation*}
Applied this approach to the negative log-likelihood function of the
channel output, this then becomes exactly the Viterbi decoding
algorithm~\cite{O:68}. However, the computation of the value
functions is still computationally expensive and success of this
method is therefore restricted to codes of small complexity. In the
next section we will give a somewhat easier model predictive control
approach.

\section{Convolutional decoding via receding horizon}

We now show how a modification of the \emph{receding horizon method}
from optimal control leads to an effective decoding algorithm for
convolutional codes. The solution is achieved by means of a slight
modification of the classical receding horizon method were $L$,
instead of 1, vectors are taken to update the solution. Moreover,
we carry
out the main minimization step by a suitable decoding algorithm for
a certain block code associated to the system. Throughout this
section we assume that (\ref{eq:systemD}) is controllable and
observable. This is actually a natural and frequently used
assumption \cite{Ros:97}.

We assume the following data are given:
\begin{enumerate}
\item  A finite sequence of received, to be decoded, words
 \begin{equation*}
 \tilde{c}_t:=\begin{pmatrix}
     \tilde{y}_t\\ \tilde{u}_t
  \end{pmatrix}\in\F^{(n-k)}\times\F^{k}, \quad t=0,\cdots, T
 \end{equation*}
 for arbitrary $T\in\N_0$.
\item Positive integers $L\leq N \leq T$.

\end{enumerate}
We then attempt to iteratively minimize the finite cost function
\begin{equation}\label{eq:finitecost}
    J(x_0, u, T)=\ds{\sum_{i=0}^{T-1}}
    (w(y_{i}-\tilde{y}_{i})+w(u_{i}-\tilde{u}_{i}))
\end{equation}
over the set of admissible inputs $u\in {\cal U}$. For this we note,
by controllability of (\ref{eq:systemD}), that any unconstrained
minimum $u^*$ of the
cost function (\ref{eq:finitecost}) can be extended to an admissible
input $u^*_{\rm{adm}}\in{\cal U} $ by steering the terminal state
$x_T$ to zero. Of course, this admissible solution  $u^*_{\rm{adm}}$
will not necessarily be an optimal solution of (\ref{eq:finitecost}).

The iteration steps to be followed at every time instant $t\in \N_0$ are
\begin{enumerate}
\item
Consider the initial (known) state as $x_t$.
\item
Solve an $N$-step finite horizon tracking problem, i.e., find the
unconstrained input sequence $\{u_{t+i}\}_{i=0}^{N-1}$ which minimizes
\begin{equation*}
    J(x_t, u, N)=\ds{\sum_{i=0}^{N-1}}
    [w(y_{t+i}-\tilde{y}_{t+i})+w(u_{t+i}-\tilde{u}_{t+i})]
\end{equation*}
\item
Update the solution input with $\{u_t,\ldots,u_{t+L-1}\}$ and use it
to update the solution output with $\{y_t,\ldots,y_{t+L-1}\}$ and to
calculate $x_{t+L}$.
\item
Update the time instant $t$ with $t+L$ until $t=T$.
\end{enumerate}

The last step of the algorithm then results in an input sequence
$u_0,\cdots, u_{T-1}$. By controllability of $(A,B)$, we can extend this
sequence to an admissible input sequence  $u_0,\cdots, u_T, u_{T+1},
\cdots, u_{T+\tau}$, by steering the final state $x_T$  into $x_{T+\tau}=0$.
Here, $\tau\leq \kappa_{\rm{max}}$, with $\kappa_{\rm{max}}$ the
largest controllability index of $(A,B)$.

The obtained input sequence  $u_0,\cdots, u_{T+\tau}$ with
associated outputs $y_0,\cdots, y_{T+\tau}$ and
\begin{equation*}
 c_t:=\begin{pmatrix}
     y_t\\ u_t
  \end{pmatrix}\in\F^{(n-k)}\times\F^{k}, \quad t=0,\cdots, T+\tau,
 \end{equation*}
then defines a codeword $c(z)=\sum_{i=0}^{T+\tau}c_iz^i\in \C$ that
serves as a decoding for $\tilde{c}$. We emphasize again, that due
to the unconstrained minimization, this is not an optimal solution
to the above tracking problem.

Step 2 represents the main problem to be solved, which we now
replace by a general block decoding step. By inspection,
$J(x_t,c,N)=w(\xi_N)$, for a vector $\xi_N=z_{t,N}+B_Nu_{t,N}\in
\F^{Nn}$ defined by
\begin{equation*}\scriptsize
    z_{t,N}:=
    \begin{pmatrix}
    CA^{N-1}\\\vdots\\C\\0\\ \vdots \\ 0 \\
    \end{pmatrix}
    x_t -
    \begin{pmatrix}
    \tilde{y}_{t+N-1}\\\vdots\\ \tilde{y}_t\\
    \tilde{u}_{t+N-1}\\\vdots\\ \tilde{u}_t
    \end{pmatrix},
  \quad
    B_N=
    \begin{pmatrix}\footnotesize
      D & CB & CAB & \ldots & CA^{N-2}B \\
      0 & D & CB & \ldots & CA^{N-3}B \\
      \vdots & \ddots & \ddots & \ddots & \vdots \\
      \vdots & & \ddots & \ddots & CB \\
      0 & \dots & \dots & 0 & D \\
      & & & &  \\
      \multicolumn{5}{c}{I_{Nk}} \\
      & & & &
    \end{pmatrix}, \quad
 u_{t,N}=
  \begin{pmatrix}
  u_{t+N-1}\\ u_{t+N-2}\\ \vdots\\u_t.
  \end{pmatrix}
\end{equation*}

Thus, each minimization step of length $N$ can be solved by decoding
the vector $z_{t,N}$ with respect to the block code $\C_N$ generated
by $B_N$.
At this point we make contact with coding theory.
Consider the block code
\begin{equation*}
\C_{N}=\{B_Nu \;| \; u\in\F^{Nk}\}\subset \F^{Nn}
\end{equation*}
with generator matrix $B_N$. The maximum likelihood decoder of
$\C_{N}$ then is
\begin{equation*}
\rm{\pi}_{\C_{N}}(z_{t,N}):=\arg\min\limits_{v\in
    \C_{N}}d(z_{t,N},v).
\end{equation*}
In particular, if we decompose any vector
$v\in\rm{\pi}_{\C_{N}}(z_{t,N})$ as $v={v_1\choose v_2}\in
\F^{n-k}\times\F^k$, then the optimal vector $u$ is equal to $-v_2$.

By replacing step (2) by a suitable decoding algorithm for $\C_N$,
we obtain the following algorithm where decoding of $\C_N$ is
represented by the function \texttt{Decoding}. Note that, for ease
of notation, we have written the column vectors $w,z-e,u$ in row
vector form.

\noindent{} \boxed{
\begin{minipage}{\textwidth}\textbf{Receding Horizon Decoding Algorithm}\\
 \texttt{\noindent{}Input:
\begin{itemize}
\item[*]
$x_0$, $\{\tilde{y}_t\}_0$, $A$, $B$, $C$, $D$ (*they describe the
system*),
\item[*]
$N$ (*length of finite horizon*)
\end{itemize}
U = Empty\_List (*the solution sequence*)}\\
$t=0$\\
\texttt{\noindent{}
 Wile Exit=NO}

\hspace{.7cm} $w=(CA^{N-1}x_t-\tilde{y}_{t+N-1},\ldots,C
x_t-\tilde{y}_{t},0,\ldots,0)$

\hspace{.7cm} $e=$\texttt{Decoding}$(B_N, w)$

\hspace{.7cm} $(c'_N,\ldots,c'_1,c''_N,\ldots,c''_1) = z - e$

\hspace{.7cm} $u = -({c''_1,\ldots,c''_L})$

\hspace{.7cm} \texttt{U = Append $u$ to U}

\hspace{.7cm} $x_{t+L}=A^Lx_{t} - \sum_{i=1}^L
A^{L-i}Bc''_i$

\hspace{.7cm} ${t=t+L}$

\hspace{.7cm} \texttt{If ${t>T}$, Exit=YES}\\
\texttt{End-While}\\
\hspace{.7cm} \texttt{Extend U to an admissible input sequence}\\
\texttt{Output: U}
\end{minipage} }

The whole algorithm takes at most $\lceil\frac{T}{L}\rceil$ many
steps. Parameter $L$ is directly related with the number of $u_j$
that can be correctly decoded at each $N$-step problem. The precise
relationship depends on the code $\C_N$ and it is given in Theorem
\ref{th:admisible_errors}. Moreover, in each step the decoding
problem may have more than one solution. We address this issue in
the final section of the paper. Although convergence to the optimal
solution of the tracking problem may not occur, we can at least
derive an upper bound for the achieved cost.

\begin{proposition}
Let $\rho_N$ be the covering radius of the code $\C_{N}$ and let $\tilde{u}$ denote any input
sequence that is produced by the algorithm. Then
\begin{equation*}
     J(x_0, \tilde{u}, T)\leq \left\lceil\frac{T}{L}\right\rceil \rho_N
\end{equation*}
\end{proposition}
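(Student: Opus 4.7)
The plan is to bound the cost contributed by each iteration of the algorithm separately by $\rho_N$, and then sum over the at most $\lceil T/L\rceil$ iterations.

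First, index the iterations by $j=0,1,\dots,\lceil T/L\rceil-1$, and set $t_j=jL$. At iteration $j$ the algorithm starts from the known state $x_{t_j}$, forms the vector $z_{t_j,N}$ and invokes \texttt{Decoding} on the block code $\C_N$ to produce an input block $u_{t_j,N}$. By the discussion preceding the proposition, the resulting cost of the full $N$-step window is exactly
\begin{equation*}
J(x_{t_j},u,N)=w(\xi_N),\qquad \xi_N=z_{t_j,N}+B_N u_{t_j,N}.
\end{equation*}
Since \texttt{Decoding} returns a nearest codeword of $\C_N$ to $z_{t_j,N}$, the distance $w(\xi_N)=d(z_{t_j,N},\C_N)$ is at most the covering radius $\rho_N$ of $\C_N$.

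Next, observe that only the first $L$ input coordinates $u_{t_j},\ldots,u_{t_j+L-1}$ (together with the corresponding outputs $y_{t_j},\ldots,y_{t_j+L-1}$) are actually committed to the output sequence. Their contribution to $J(x_0,\tilde u,T)$ is
\begin{equation*}
\sum_{i=0}^{L-1}\bigl[w(y_{t_j+i}-\tilde y_{t_j+i})+w(u_{t_j+i}-\tilde u_{t_j+i})\bigr],
\end{equation*}
which is a partial Hamming weight of $\xi_N$ taken over $2L$ of its $2N$ block coordinates. Therefore this partial weight is bounded above by $w(\xi_N)\leq\rho_N$.

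Finally, summing the per-iteration bound over all $\lceil T/L\rceil$ iterations yields
\begin{equation*}
J(x_0,\tilde u,T)=\sum_{j=0}^{\lceil T/L\rceil-1}\sum_{i=0}^{L-1}\bigl[w(y_{t_j+i}-\tilde y_{t_j+i})+w(u_{t_j+i}-\tilde u_{t_j+i})\bigr]\leq\left\lceil\frac{T}{L}\right\rceil\rho_N,
\end{equation*}
where a final (possibly incomplete) block of length less than $L$ is absorbed using the same per-iteration bound $\rho_N$. The admissibility extension after time $T$ plays no role since $J(x_0,\tilde u,T)$ only sums contributions up to index $T-1$. The only mildly delicate point is matching indices between $\xi_N$ (written in reversed block order) and the committed coordinates, but this is purely bookkeeping and does not affect the inequality $w(\text{first }2L\text{ blocks of }\xi_N)\leq w(\xi_N)$.
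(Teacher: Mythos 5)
Your proof is correct and follows essentially the same route as the paper: bound the cost of each decoding step by the covering radius of $\C_N$ (since the decoder returns a codeword within distance $\rho_N$ of $z_{t,N}$) and sum over the at most $\lceil T/L\rceil$ iterations. You are in fact somewhat more careful than the paper's own two-line argument, since you explicitly note that only the first $L$ committed blocks contribute and that their partial weight is dominated by $w(\xi_N)\leq\rho_N$, a point the paper leaves implicit.
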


\begin{proof}
For any vector $z\in\F^{Nn}$, there exists a codeword at distance less
or equal to $\rho_N$. Thus the cost added to the functional at every
iteration step $t=0,L,2L,\cdots,\lceil\frac{T}{L}\rceil$ is upper
bounded by
 \begin{equation*}
     \underset{u}{min}\{w(z_{t,N}+B_Nu)\}=w(e)\leq \rho_N.
 \end{equation*}
Therefore, the total cost obtained after $\lceil\frac{T}{L}\rceil$ steps is
upper bounded by $\lceil\frac{T}{L}\rceil \rho_N$.
\end{proof}

The number $L$ of steps to update can be chosen in dependence on the
correction properties of the code $\C_{N}$. Recall the notion of
\emph{decoding error}, which occurs when the decoding algorithm
outputs a codeword different to the original one. A case in point
here is where in the iterative method the solution is not updated
with the entire codeword from $\C_{N}$ but with the components
corresponding to $u_t,\ldots,u_{t+L-1}$. Those decoding errors which
do not affect these components will be called admissible decoding
errors. The precise connection between $L$ and $N$ is given by the
following theorem.

\begin{theorem}\label{th:admisible_errors}
Let $d_N$ be the minimum distance of the code $\C_N$. The decoding
scheme can correct $\lfloor\frac{d'}{2}\rfloor$ errors, $d'\geq
d_N-1$, up to an admissible decoding error, if and only if in every
codeword from $c\in\C_N$ of weight $w(c)\leq d'$ all the components
$c_{(N-L)(n-k)+1},\ldots,c_{N(n-k)}$ and $c_{Nn-Lk+1},\ldots,c_{Nn}$
(i.e., those that don't admit a decoding error) are zero.
\end{theorem}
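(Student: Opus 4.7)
The theorem characterises when the block code $\C_N$, decoded inside each receding horizon step, is guaranteed to produce a codeword that matches the transmitted one on the ``important'' coordinates---the last $L$ output blocks $c_{(N-L)(n-k)+1},\ldots,c_{N(n-k)}$ and the last $L$ input blocks $c_{Nn-Lk+1},\ldots,c_{Nn}$, which the algorithm actually commits at each iteration. Since $\C_N$ is linear, two codewords agree on these positions if and only if their difference does, so my plan is to translate both implications into statements about codewords of $\C_N$ of weight at most $d'$ and to use the elementary half-minimum-distance geometry.

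Sufficiency ($\Leftarrow$). Assume every $c\in\C_N$ of weight $w(c)\le d'$ vanishes on the important positions, and let $z=v+e$ be a received word with $v\in\C_N$ and $w(e)\le\lfloor d'/2\rfloor$. For any minimum-distance decoding $\hat v$ of $z$ we have $d(z,\hat v)\le d(z,v)\le\lfloor d'/2\rfloor$, and the triangle inequality yields $d(v,\hat v)\le 2\lfloor d'/2\rfloor\le d'$. Thus $v-\hat v\in\C_N$ has weight at most $d'$, so by hypothesis it vanishes on the important positions; $v$ and $\hat v$ therefore agree there and the decoding error is admissible.

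Necessity ($\Rightarrow$). I proceed contrapositively: assume some $c^\ast\in\C_N$ has $w(c^\ast)\le d'$ and a nonzero entry at one of the important positions. I take the transmitted codeword to be $0$, choose a subset $S\subset\mathrm{supp}(c^\ast)$ of size $s=\lfloor w(c^\ast)/2\rfloor$, and let $z$ coincide with $c^\ast$ on $S$ and vanish elsewhere. Then $d(z,0)=s\le\lfloor d'/2\rfloor$, while $d(z,c^\ast)=\lceil w(c^\ast)/2\rceil\le d(z,0)+1$; in particular, when $w(c^\ast)$ is even, $z$ is equidistant from $0$ and $c^\ast$, and a minimum-distance decoder may legitimately output $c^\ast$. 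Because $c^\ast$ differs from the transmitted codeword at an important position, this is not an admissible decoding error, so the scheme fails.

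The main technical obstacle is the parity corner case in which $w(c^\ast)$ is odd and equal to $d'$---a configuration that requires $d'$ odd. The midpoint construction then puts $z$ strictly closer to $0$ than to $c^\ast$, so within the radius $\lfloor d'/2\rfloor$ the decoder is never forced to confuse the two codewords. Closing this gap calls either for an interpretation of ``correct $\lfloor d'/2\rfloor$ errors'' that records ties at the decoding radius as failures, or for a slight sharpening of the hypothesis to ``$w(c)\le 2\lfloor d'/2\rfloor$''---which is exactly the bound the sufficiency argument actually uses. With this reading, the two implications combine to yield the claimed equivalence.
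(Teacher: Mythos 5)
Your sufficiency argument coincides with the paper's: the paper runs it contrapositively (a non-admissible decoding ambiguity with two error patterns of weight at most $\lfloor d'/2\rfloor$ produces a codeword $c''=c-c'$ of weight at most $2\lfloor d'/2\rfloor\le d'$ that is nonzero on the committed positions $\bar{\alpha}$), while you run it forwards through the triangle inequality; the content is identical. Where you genuinely add something is the converse, which the paper dismisses with ``the inverse implication is immediate.'' Your midpoint construction makes that direction explicit, and the parity obstruction you flag is real: if $d'$ is odd and the only codewords of $\C_N$ of weight at most $d'$ that are nonzero on $\bar{\alpha}$ have weight exactly $d'$, then no such codeword can be written as $e'-e$ with $w(e),w(e')\le\lfloor d'/2\rfloor$, so no decoding failure arises and the stated ``only if'' fails. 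The equivalence is exact only with the threshold $2\lfloor d'/2\rfloor$ in place of $d'$ (equivalently, for even $d'$), which is precisely the bound both your argument and the paper's actually exploit in the sufficiency direction. This imprecision is harmless for the rest of the paper, since the Corollary that follows uses only the sufficiency half, but your reading of the hypothesis as ``$w(c)\le 2\lfloor d'/2\rfloor$'' is the correct repair.
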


\begin{proof}
Note that the generator matrix $B_N$ is systematic, and the check
matrix of the code is well known to be
\begin{equation*}
H_N=\left(
    \begin{matrix}
        & & & &  D & CB & CAB & \ldots & CA^{N-2}B \\
        & &  &  & 0 & D & CB & \ldots & CA^{N-3}B \\
        & & -Id_{N(n-k)} & & \vdots & \ddots & \ddots & \ddots & \vdots \\
         & & & &  \vdots & & \ddots & \ddots & CB \\
         & & & &  0 & \dots & \dots & 0 & D \\
    \end{matrix}
    \right ).
\end{equation*}
Note that $H_N$ corresponds to the kernel representation matrix
(\ref{eq:rosen}), up to a change of ordering of time indices to
decreasing order and removing the first block row (\ref{eq:rosen}),
that corresponds to the zero return condition.

The minimum distance $d_N$ of the code $\C_{N}$ is precisely the
minimum number of linearly dependent columns of $H_N$, \cite{MS:77},
as the coefficients of one such linear dependency would be the
components of a codeword from $\C_{N}$. This bounds the number of
errors that can be corrected in an $N$-step. Note, however, that
after every $N$-step decoding, the method updates the partial
solution just with $u_t,\ldots,u_{t+L-1}$,  i. e., decoding errors
that occur in the components corresponding to
$u_{t+L},\ldots,u_{t+N}$ (and hence also in those corresponding to
$y_{t+L},\ldots,y_{t+N}$) are admissible. The set of components
corresponding to these vectors that allow errors is
$\alpha=\{1,\ldots,(N-L)(n-k),N(n-k)+1,\ldots,Nn-Lk\}$. Let us
denote its complementary by $\bar{\alpha}$.

An admissible error, a vector with support in $\alpha$, is also a
codeword: an admissible error is the difference between the codeword
sent and the codeword wrongly decoded, and by linearity the
difference of two codewords is also a codeword.

Let us assume that the code doesn't allow to correct error vectors
of weight $t'=\lfloor\frac{d'}{2}\rfloor$ up to an admissible
decoding error. Then, there exists a vector $v$ such that for two
different codewords $c,c'\in\C_{B_N}$ it can be written as $v=c+e$
and $v=c'+e'$ with $w(e),w(e')\leq t'$. Since decoding up to an
admissible error is not possible, we have that $c_{\bar{\alpha}}\neq
c'_{\bar{\alpha}}$, i. e., $e_{\bar{\alpha}}\neq e'_{\bar{\alpha}}$.
Then, $c+e=c'+e'$ and by linearity $c-c'=e'-e=c''$ is a codeword
from $\C_{N}$ with weight $w(c'')\leq w(e)+w(e')\leq 2t'\leq d'$ and
such that $c''_{\bar{\alpha}}\neq 0$, which contradicts the
assumption of the theorem. The inverse implication is immediate.
\end{proof}

As a consequence, the decoding property of our algorithm is as
follows.

\begin{corollary}
Let $L$, $N$, $\C_N$ and $d'$ be as in Theorem
\ref{th:admisible_errors}. Then:
\begin{enumerate}
\item
The output of the algorithm is a codeword from the
convolutional code.
\item
If in every subsequence
\begin{equation*}
    (\tilde{c}_{jL},\tilde{c}_{jL+1},\ldots,\tilde{c}_{jL+N-1}),\quad
    j\geq 0,
\end{equation*}
of the received sequence the Hamming weight of the error is at most
$d'$, then the algorithm recovers the original convolutional
codeword.
\end{enumerate}
\end{corollary}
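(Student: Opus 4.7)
The plan is to handle the two parts separately; Part (1) is essentially by construction, and the main induction lives in Part (2).

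For (1), I would note that the algorithm starts from $x_0=0$ and, after generating the raw sequence $u_0,\ldots,u_{T-1}$, explicitly appends an extension $u_T,\ldots,u_{T+\tau}$ that steers the terminal state back to zero; such an extension exists with $\tau\leq\kappa_{\rm{max}}$ by controllability of $(A,B)$. The full input sequence therefore lies in the module of zero return $\mathcal{U}$, so the associated pair $(u,y)$ satisfies $(\ref{eq:rosen})$ and thus defines a codeword $c(z)\in\C$.

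For (2), let $c^*\in\C$ denote the (unknown) sent codeword, and $u^*_t,y^*_t,x^*_t$ the corresponding trajectory of $(\ref{eq:systemD})$. I would use induction on the iteration index $j=0,1,\ldots$ with the invariant
\begin{equation*}
u_t=u^*_t\ \text{for all } t<jL,\qquad x_{jL}=x^*_{jL}.
\end{equation*}
The base case $j=0$ is immediate since $x_0=x^*_0=0$. For the inductive step, because $x_{jL}=x^*_{jL}$, the vector $z_{jL,N}$ fed to the block decoder differs from the codeword of $\C_N$ generated by the true inputs $u^*_{jL},\ldots,u^*_{jL+N-1}$ (with initial state $x^*_{jL}$) exactly by the stacked received error $\tilde c_{jL+i}-c^*_{jL+i}$, $i=0,\ldots,N-1$. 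By hypothesis the Hamming weight of this error is at most $d'$. Theorem $\ref{th:admisible_errors}$ then guarantees that any decoding error at this step is supported in $\alpha$; in particular the $L$ input coordinates committed by the algorithm are correct, so $u_{jL+i}=u^*_{jL+i}$ for $i=0,\ldots,L-1$. The state recursion $x_{jL+L}=A^Lx_{jL}+\sum_{i=0}^{L-1}A^{L-1-i}Bu^*_{jL+i}$ then yields $x_{jL+L}=x^*_{jL+L}$, closing the induction. Once the while-loop terminates with $u_t=u^*_t$ for all $t<T$, the final extension to a zero-return sequence can be chosen to agree with the tail of $u^*$, and Part (1) then certifies that the output is the codeword $c^*$.

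The main obstacle I expect is aligning the block-code indices with those of the convolutional code in the inductive step: the matrix $B_N$ lists the inputs in reverse time order and $H_N$ was obtained from $(\ref{eq:rosen})$ by stripping the zero-return row and reversing time, so some care is needed to match the ``non-admissible'' support set $\bar\alpha$ of Theorem $\ref{th:admisible_errors}$ with the precise last $L$ input slots that the algorithm actually commits at each iteration. Once this index bookkeeping is done, the rest reduces to a direct application of the theorem combined with the standard state recursion.
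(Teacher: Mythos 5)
Your proposal is correct and takes essentially the same route as the paper: part (1) is the same by-construction argument (the committed outputs are consistent with the dynamics, and the final zero-return extension places the output in the kernel of (\ref{eq:rosen})), and part (2) is the window-by-window application of Theorem \ref{th:admisible_errors} that the paper compresses into ``direct consequence,'' with your induction on the iteration index merely making the state-matching invariant explicit. The one caveat --- inherited from the paper's own statement rather than introduced by you --- is that Theorem \ref{th:admisible_errors} as stated corrects $\lfloor d'/2\rfloor$ errors, so invoking it for window error weight up to $d'$ requires reading the corollary's hypothesis accordingly.
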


\begin{proof}
(1)  Every subsequence $(c_{jL},c_{jL+1},\ldots,c_{(j+1)L-1})$ that
is generated by the algorithm at each step is in the kernel of
$H_L$. Moreover, considering the way in which $x_{t+L}$ is updated
in the algorithm, it follows that the sequence $\{c_t\}_{t=0}^T$ is
in the right kernel of the submatrix obtained by removing the first
block row in (\ref{eq:rosen}). The last step of the algorithm
consists in extending the sequence so that the last state becomes
zero.  Therefore the output of the algorithm is also in the kernel
of the first block row of (\ref{eq:rosen}) and is therefore a
codeword.

(2) is a direct consequence of Theorem \ref{th:admisible_errors}.
\end{proof}

\begin{example}
Let us consider the  convolutional code over $\F_5$ generated by the
matrix
\begin{equation*}
\begin{pmatrix}
    1 & 4+z \\
    3 & z \\
    1 & 0
\end{pmatrix} =
\begin{pmatrix}
    P(z) \\
    Q(z)
\end{pmatrix}
\end{equation*}
which as we have seen before can be regarded as the linear system
described by the equations
\begin{equation*}
  \begin{array}{l}
    x_{t+1}= (1,2)\, u_t\\
    y_t = 4 x_t + (1,3)\, u_t\\
    x_0=0
  \end{array}\; ,
\end{equation*}
i. e., it has a minimal realization $(A,B,C,D)=( (0), (1,2), (4),
(1,3))$.

Let us fix $N=2$. Then our algorithm will work with the received
vectors $v_t$, $v_{t-1}$ at each time instant $t$, when a vector
must be decoded with respect to the code that has as check matrix
\begin{equation*}
H_N=\begin{pmatrix}
    1 & 0 & 1 & 3 & 4 & 3\\
    0 & 1 & 0 & 0 & 1 & 3
\end{pmatrix}
\end{equation*}
where the columns 1, 3, 4 correspond to the coordinates of $v_t$. We
observe that although the minimum distance of the the code is 2,
there is no codeword of weight $\leq 2$ with support in the
positions 2, 5, 6. Then if we fix $L=1$ we can allow errors in
coordinates 1, 3, 4 ($v_{t-1}$ will be correctly decoded) and in
exchange be able to correct one error. In this way our scheme will
produce the correct $v_{t-1}$ in each decoding step.

The convolutional code has parameters $[n,k,\delta,\df]=[3,2,1,3]$,
and in particular it allows the correction of one error. Hence our
algorithm takes full advantage of the error correcting capacities of
the code.
\end{example}

\section{Appendix: Uniqueness of decoded sequences}\label{multiplesolutions}

An important aspect of decoding is whether it has one or more
solutions. Since our algorithm works sequentially we are interested
to know whether there is only one list of vectors
$u_t,\ldots,u_{t+L-1}$ in each block decoding step. Note that since
the matrix $B_N$ has maximum rank, then $u_t,\ldots,u_{t+L-1}$ are
uniquely determined by a given block codeword at each step of the
algorithm. Hence the question reduces to the one of how many
convolutional code words are closest to a received message.

\begin{example}\label{ex:double_solution}
Let us consider over $\F_2$ the finite time tracking problem problem
with $N=L=1$ defined by the matrices
\begin{equation*}
    A=
    \begin{pmatrix}
        1 & 0 \\ 0 & 1
    \end{pmatrix}\;
    B=
    \begin{pmatrix}
        0 & 1 \\ 1 & 1
    \end{pmatrix}\;
    C=
    \begin{pmatrix}
        0 & 1 \\ 1 & 1
    \end{pmatrix}\;
    D=
    \begin{pmatrix}
        0 & 1 \\ 1 & 1
    \end{pmatrix}\;
\end{equation*}
Assume that the problem is solved up to a time instant $t$ and let
$x_t=(1,0)$, $\tilde{u}_{t+1}=(0,0)$ and $\tilde{y}_{t+1}=(0,0)$.
We need to decode the vector $z_{t,1}=(0,1,0,0)^\top$ with respect
to the code generated by the matrix
\begin{equation*}
    B_1=
    \begin{pmatrix}
        D \\ I_2
    \end{pmatrix} =\tiny
    \begin{pmatrix}
        0 & 1 \\ 1 & 1 \\ 1 & 0 \\ 0 & 1
    \end{pmatrix}\;.
\end{equation*}
However both the vectors $(0,\underline{0},0,0)^\top$ and
$(0,1,\underline{1},0)^\top$ belong to the code and are at a Hamming
distance of 1 from the vector that we want to decode. Hence both the
corresponding values for $u_t$, $(0,0)$ and $(1,0)$, would be
equally valid.

We consider then a problem with window length $N=2$. Let
$\tilde{u}_{t+2}=(0,0)$, $\tilde{y}_{t+2}=(1,0)$. Then we have to
decode the vector $(1,1,0,1,0,0,0,0)^\top$ with respect to the code
generated by the matrix
\begin{equation*}
    B_2=
    \begin{pmatrix}
        D & CB \\ 0 & D \\ I & 0 \\ 0 & I
    \end{pmatrix}
    = \scriptsize
    \begin{pmatrix}
        0 & 1 & 1 & 1 \\ 1 & 1 & 1 & 0 \\ 0 & 0 & 0 & 1 \\ 0 & 0 & 1 & 1 \\
        1 & 0 & 0 & 0 \\ 0 & 1 & 0 & 0 \\ 0 & 0 & 1 & 0 \\ 0 & 0 & 0 & 1
    \end{pmatrix}\;.
\end{equation*}
The (unique) closest codeword to that vector is
$(1,1,0,1,0,0,\underline{1},0)^\top$ and hence there is a unique
solution which yields $u_t=(1,0)$.
\end{example}

To study the probabilities of multiple solutions we count the number
of vectors that can be uniquely decoded, which are those inside the
largest disjoint balls, centered at the codewords.

Consider a code $\C$ of length $n$, dimension $k$ and minimum
distance $d$ defined over the field $\F$ with $q$ elements. Its
error correction capacity is $t=\lfloor\frac{d-1}{2}\rfloor$. Each
ball of radius $t$ contains exactly $\ds{\sum_{i=0}^t}{n \choose
i}(q-1)^i$ vectors, and there are $q^k$ of such balls (as many as
codewords). Then, the ratio of the number of uniquely decodable
vectors with respect to the cardinality of the whole ambient space
\nolinebreak $\F^n$, known as the \emph{density} of $\C$, is
\begin{equation*}
    \delta_\C=
\frac{q^k\ds{\sum_{i=0}^t}
    {n \choose i}(q-1)^i}{q^{n}}=
    \frac{\ds{\sum_{i=0}^t}
    {n \choose i}(q-1)^i}{q^{n-k}}\;.
\end{equation*}

The probability that a randomly chosen vector is out of all these
balls is $P^o_{\C} = 1-\delta_\C$.

Before addressing the next result, and for the sake of simplicity,
we fix the following notation
\begin{equation*}
    E_{k,t}=\ds{\sum_{i=0}^{t}}{k \choose i}(q-1)^i\;.
\end{equation*}

\begin{theorem}\label{th:prob_multisolution}
Given a finite horizon tracking problem of window length $N$, the
probability that there are $M$ different solutions which differ in
$\Delta$ consecutive vectors can be upper-bounded by
\begin{equation*}
 \frac{\delta^{M-1}_{\mathcal{C}_1}}{E_{k,t}}\ds{\prod_{i=N}^{\Delta}} P^o_{\mathcal{C}_i}.
\end{equation*}
\end{theorem}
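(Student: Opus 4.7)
The plan is to view the event ``there are $M$ distinct decoded sequences that differ in $\Delta$ consecutive vectors'' as a conjunction of three kinds of subevents, each contributing one factor in the claimed bound, and to bound the joint probability by the product of individual probabilities (up to an overcount correction). The underlying probability model is the uniform distribution on $\F^{Nn}$ for the received block, which converts each counting bound into a probability bound by dividing by $|\F^{Nn}|$.

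First I would isolate the simplest piece: whenever the block decoding step for $\C_N$ returns two or more equidistant codewords, the received word at that step cannot lie in any of the disjoint balls of radius $t=\lfloor(d_N-1)/2\rfloor$ that witness unique decodability of $\C_N$. Under the uniform model, this forces a multiplicative factor of $P^o_{\C_N}=1-\delta_{\C_N}$. Next, I would argue that if the $M$ candidate decodings coincide outside a window of length $\Delta$ but disagree inside it, then the same ``outside every uniquely decodable ball'' condition must hold not only for $\C_N$ but also for the longer block codes $\C_{N+1},\dots,\C_\Delta$ obtained by extending the window one step at a time; each extension is a new restriction on a fresh coordinate, so the successive conditional probabilities can be bounded by $P^o_{\C_{N+1}},\dots,P^o_{\C_\Delta}$. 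This yields the product $\prod_{i=N}^{\Delta}P^o_{\C_i}$.

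For the factor $\delta_{\C_1}^{M-1}/E_{k,t}$, I would count how many configurations of $M$ pairwise distinct candidate codewords are compatible with a given received word. Fixing one distinguished codeword as reference, each of the remaining $M-1$ codewords must differ from the received word by an error of weight at most $t$ restricted to the single-step code $\C_1$, contributing one factor of $\delta_{\C_1}$ per extra candidate. Finally, since within a single step the input difference $u_t-u_t'$ ranges over the $E_{k,t}$ vectors of Hamming weight at most $t$ in $\F^k$ but only one realizes the fixed pair of codewords, we divide by $E_{k,t}$ to remove the resulting overcount.

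The hard part will be step two: making the multiplicative bound over successive window sizes rigorous. The events ``outside all uniquely decodable balls of $\C_i$'' for $i=N,\dots,\Delta$ are not actually independent under the uniform measure, since the receiving blocks overlap in their earlier coordinates. The honest route is to condition sequentially, observe that conditioning on the event for $\C_i$ only constrains coordinates already fixed in the event for $\C_{i+1}$ on a strict subset of its coordinates, and bound the conditional probability for $\C_{i+1}$ by its unconditional value $P^o_{\C_{i+1}}$ by a monotonicity-of-balls argument. Once that step is secured, assembling the three contributions above gives the stated upper bound; no tightness is claimed, which is consistent with the union-bound flavor of the argument.
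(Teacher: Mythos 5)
Your overall decomposition matches the paper's: the event is split into (a) every window length $l=N,\dots,\Delta$ yielding a non-uniquely-decodable received block, bounded by $\prod_{i=N}^{\Delta}P^o_{\mathcal{C}_i}$, and (b) the $M$ candidates nevertheless agreeing in the last input vector $u_{t+\Delta}$, bounded by $\delta_{\mathcal{C}_1}^{M-1}/E_{k,t}$. Your point that the ``outside all balls'' events for successive window sizes overlap in coordinates and are not independent is well taken --- the paper simply multiplies these probabilities and asserts the product is an upper bound, so you are if anything more candid about that step than the source.

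The genuine gap is in your derivation of the factor $\delta_{\mathcal{C}_1}^{M-1}/E_{k,t}$, and specifically in your justification for dividing by $E_{k,t}$. You argue that the input difference $u_t-u_t'$ ``ranges over the $E_{k,t}$ vectors of weight at most $t$'' and that only one realizes a fixed pair of codewords, so one divides out an overcount. That is not a correct accounting: for a fixed pair of distinct codewords the input difference is a single fixed nonzero vector, and nothing is being overcounted $E_{k,t}$-fold. In the paper, $E_{k,t}=\sum_{r=0}^{t}\binom{k}{r}(q-1)^r$ arises for a different reason: the $M$ one-step received sub-vectors $v^j=(w^j_{1,\Delta},w_{2,\Delta})$ all share the same length-$k$ block $w_{2,\Delta}$, and $E_{k,t}$ is the normalizing denominator of the conditional probability $P_{c,\alpha}$ that this shared block carries exactly $\alpha$ errors given that the total error weight is at most $t$. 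One then bounds, for each $j=2,\dots,M$, the probability that the free $(n-k)$-block of $v^j$ realizes an error of the same weight (a factor $\binom{n-k}{\beta}(q-1)^{\beta}/q^{n-k}$), and summing over $\alpha$ and $\beta$ and applying the Vandermonde-type inequality $\sum_{\alpha}\binom{k}{\alpha}(q-1)^{\alpha}\sum_{\beta\le t-\alpha}\binom{n-k}{\beta}(q-1)^{\beta}\le\sum_{i\le t}\binom{n}{i}(q-1)^i$ yields exactly $\delta_{\mathcal{C}_1}^{M-1}/E_{k,t}$. Your heuristic ``one factor of $\delta_{\mathcal{C}_1}$ per extra candidate'' lands on the right exponent, but without the conditioning on the shared $k$ coordinates you cannot produce the $1/E_{k,t}$ term legitimately, so this part of your argument needs to be replaced by the explicit conditional computation.
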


\begin{proof}
To have $M$ different solutions that differ in $\Delta$ consecutive
vectors means that at some step of the algorithm all finite horizon
tracking problems with window length $\leq \Delta$ have more than
one solution, and that they cannot be discriminated with a larger
window, i. e., all solutions agree in the last input vector
$u_{t+\Delta}$.\\

On each tracking problem with window lengths $N\leq l \leq \Delta$
the vector to be decoded is further than the error correction
capacity of the code $\C_l$, and as seen before, the probability of
this to happen is $P^o_{\C_l}$. Since this is the case for all
$l=N,\ldots,\Delta$, the probability that all finite horizon
tracking problems with window length $\leq \Delta$ have more than
one solution can be upper-bounded by
\begin{equation}\label{eq:prob_bound}
    P^o_{{\mathcal{C}_N}}{\cdot}\ldots{\cdot}P^o_{{\mathcal{C}_{\Delta}}}=
    \ds{\prod_{i=N}^{\Delta}} P^o_{\mathcal{C}_i}\quad .
\end{equation}

Let us study now the probability that $M$ different optimal
solutions of a finite horizon tracking problem of length $\Delta$
have the same solution vector $u_{t+\Delta}$. Let
$w_{t,N}=(w_{1,\Delta},w_{1,\Delta-1},\ldots,w_{1,1},w_{2,\Delta},w_{2,\Delta-1},\ldots,w_{2,1})$
be the vector to be decoded. For each of the $M$ solutions,
$\{u_{t+i}^j\}_{i=0}^\Delta$ with $j\leq M$ and
$u_{t+\Delta}^1=\ldots=u_{t+\Delta}^M$, let
\begin{equation*}
    w^j_{1,\Delta}=w_{1,\Delta} - \ds{\sum_{i=1}^\Delta} CA^{i-1}B
    u^j_{t+\Delta-i}\quad .
\end{equation*}
Note that the fact that the $M$ solutions have the same
$u_{t+\Delta}$ is equivalent to the $M$ vectors
$v^j=(w^j_{1,\Delta},w_{2,\Delta})$ are decoded to the same codeword
of the block code $\C_1$ and all of them have error of the same
weight (since all $M$ solutions up to instant $\Delta-1$ are also
optimal, otherwise some would have been discarded for smaller window
lengths, and hence contribute the same to the cost functional).

Then, let $c\in\C_{B_1}$ be the codeword to which $v^1$ is decoded,
$c=(c_1,c_2)$ according to the splitting of the $v^j$. We have
$d(c,v^1)=e\leq t=\lfloor\frac{d-1}{2}\rfloor$. The probability that
for all $j=2,\ldots, M$, $v^j$ is also decoded to $c$ and
$d(c,v^j)=e$ depends on:

\begin{itemize}
\item
the probability that there are $\alpha$ error components in the
part $w_{2,\Delta}$ (which is a common part of length $k$ for all
$v^j$)
\begin{equation*}
    P(d(c_2,w_{2,\Delta})=\alpha)=
    \frac{\#\{v|d(c_2,v)=\alpha\}}{\#\{v|d(c_2,v)\leq t\}}=
    \frac{{k\choose \alpha}(q-1)^\alpha}
    {\ds{\sum_{r=0}^t} {k\choose r}(q-1)^r}:=P_{c,\alpha}
\end{equation*}
\item
the probability that $d(c,v^1)=e=\alpha+\beta$ ($e\leq t$) provided
that $d(c_2,w_{2,\Delta})=\nlb\alpha$
\begin{equation*}
    P(d(c,v^1)=\alpha+\beta\,|\,d(c_2,w_{2,\Delta})=\alpha)=
    \frac{{n-k\choose \beta}(q-1)^{\beta}}
    {\ds{\sum_{s=0}^{t-\alpha}} {n-k\choose s}(q-1)^s}:=P_{e|c,\alpha}
\end{equation*}
\item
the probability that for each $j=2,\ldots,M$ $d(c,v^j)=e$ provided
that $d(c_2,w_{2,\Delta})=\alpha$ and $d(c,v^1)=e$
\begin{equation*}
    P(d(c,v^j)=e\,|\,d(c_2,w_{2,\Delta})=\alpha, d(c,v^1)=e)=
    \frac{{n-k\choose \beta}(q-1)^{\beta}}{q^{n-k}}:=P_{v^j|c,\alpha,e}
\end{equation*}
\end{itemize}

Considering all the possibilities for values of $\alpha$ and $\beta$
(and hence of $e$) we have that the probability that all vectors
$v^j$ are decoded to the same codeword of $\C_1$ and that their
errors have the same weight is
\begin{equation}\label{eq:probv'M-1}
  \begin{array}{l}
    \ds{\sum_{\alpha=0}^t\sum_{\beta=0}^{t-\alpha}} P_{c,\alpha} \cdot P_{e|c,\alpha}
    \cdot \ds{\prod_{j=2}^M} P_{v^j|c,\alpha,e}=\\
    \frac{\ds{\sum_{\alpha=0}^t} {k\choose \alpha}(q-1)^\alpha
    \ds{\sum_{\beta=0}^{t-\alpha}}\left({n-k\choose
    \beta}(q-1)^{\beta}\right)^M
    \left(\ds{\sum_{s=0}^{t-\alpha}} {n-k\choose s}(q-1)^s\right)^{-1}}
    {q^{(M-1)(n-k)}\ds{\sum_{r=0}^t} {k\choose r}(q-1)^r} \;\;.
  \end{array}
\end{equation}

Taking into account that
\begin{equation*}
\hspace{-.5cm}
\begin{matrix}
    \ds{\sum_{\beta=0}^{t-\alpha}}\left({n-k\choose \beta}(q-1)^{\beta}\right)^M
    \leq \left(\ds{\sum_{\beta=0}^{t-\alpha}}{n-k\choose \beta}(q-1)^{\beta}\right)^M
    \\{}\\
    {k\choose \alpha}(q-1)^\alpha\leq ({k\choose \alpha}(q-1)^\alpha)^{M-1}
    \\{}\\
    \ds{\sum_{\alpha=0}^t\left({k\choose \alpha}(q-1)^\alpha\right)^{M-1}}
    \left(\ds{\sum_{\beta=0}^{t-\alpha}} {n-k\choose \beta}(q-1)^{\beta}\right)^{M-1}
    \leq \left(\ds{\sum_{\alpha=0}^t {k\choose \alpha}}(q-1)^\alpha \ds{\sum_{\beta=0}^{t-\alpha}}
    {n-k\choose \beta}(q-1)^{\beta}\right)^{M-1}
    \\{}\\
    \ds{\sum_{\alpha=0}^t}{k\choose \alpha}(q-1)^\alpha
    \ds{\sum_{\beta=0}^{t-\alpha}}{n-k\choose \beta}(q-1)^{\beta} \leq
    \ds{\sum_{i=0}^t}{n \choose i}(q-1)^i\quad
    (\text{equality }\Leftrightarrow k,n-k\geq t)
\end{matrix}
\end{equation*}
then (\ref{eq:probv'M-1}) is upper-bounded by
\begin{equation}\label{eq:probM-1bound}
    \frac{\left(\ds{\sum_{\alpha=0}^t}{k\choose \alpha}(q-1)^\alpha
    \ds{\sum_{\beta=0}^{t-\alpha}}{n-k\choose \beta}(q-1)^{\beta}\right)^{M-1}}
    {E_{k,t}\; q^{(M-1)(n-k)}} \leq \frac{\delta_{\mathcal{C}_1}^{M-1}}{E_{k,t}}\;\;.
\end{equation}

Thus, the probability that there are $M$ different solutions which
differ in $\Delta$ consecutive vectors is upper-bounded by the
product of (\ref{eq:prob_bound}) and (\ref{eq:probM-1bound}):
\begin{equation}\label{eq:bound}
 \frac{\delta^{M-1}_{\mathcal{C}_1}}{E_{k,t}}\ds{\prod_{i=N}^{\Delta}} P^o_{\mathcal{C}_i}.
\end{equation}

\end{proof}

\section{Conclusion}

We develop a system theoretic approach towards convolutional
decoding, following the well-known interpretation of convolutional
codes as linear systems. The Bellman optimality principle, applied
to optimizing the Hamming distance function  for linear systems over
finite fields, then yields an optimal control decoding algorithm
that is closely related to the Viterbi algorithm.

To obtain an algorithm with lower computational cost, we propose a
model predictive control algorithm, using a receding horizon
iteration. This new algorithm has good decoding properties, as it
yields desired codeword as long as there are not too many errors on
$N$ consecutive received vectors. We also estimate the probability
that the algorithm computes a unique solution.

\bibliographystyle{amsplain}
\bibliography{../Biblio/biblio,../Biblio/biblioAG,../Biblio/biblioST}
\providecommand{\bysame}{\leavevmode\hbox
to3em{\hrulefill}\thinspace}
\providecommand{\MR}{\relax\ifhmode\unskip\space\fi MR }
\providecommand{\MRhref}[2]{
 \href{http://www.ams.org/mathscinet-getitem?mr=#1}{#2}
}
\providecommand{\href}[2]{#2}

\end{document}